\documentclass[journal]{IEEEtran}
\usepackage[cmex10]{amsmath}
\usepackage{amssymb}
\usepackage{amsthm}
\usepackage{epsfig}
\usepackage{color}
\usepackage{algorithm}
\usepackage{algorithmic}
\usepackage{multirow}
\usepackage{array}
\newtheorem{theorem}{Theorem}

\usepackage{epsfig}
\usepackage{color}

\newcommand{\squeezeup}{\vspace{-2.5mm}}
\begin{document}
\title{One Breaker is Enough: \\Hidden Topology Attacks on Power Grids}
\author{\authorblockN{Deepjyoti Deka,~ Ross Baldick ~and~ Sriram Vishwanath\\}
\authorblockA{Department of Electrical \& Computer Engineering\\
University of Texas at Austin\\
Email: deepjyotideka@utexas.edu, baldick@ece.utexas.edu, sriram@ece.utexas.edu }}
\maketitle
\begin{abstract}
A coordinated cyber-attack on grid meter readings and breaker statuses can lead to incorrect state estimation that can subsequently destabilize the grid. This paper studies cyber-attacks by an adversary that changes breaker statuses on transmission lines to affect the estimation of the grid topology. The adversary, however, is incapable of changing the value of any meter data and can only block recorded measurements on certain lines from being transmitted to the control center. The proposed framework, with limited resource requirements as compared to standard data attacks, thus extends the scope of cyber-attacks to grids secure from meter corruption. We discuss necessary and sufficient conditions for feasible attacks using a novel graph-coloring based analysis and show that an optimal attack requires breaker status change at only ONE transmission line. The potency of our attack regime is demonstrated through simulations on IEEE test cases.
\end{abstract}

\section{Introduction}
Real time operation of the power grid and computation of electricity prices \cite{LMP} require accurate estimation of its structure and critical state variables. Remote Terminal Units (RTUs) transmit measurements collected from different grid components to the central control center for state estimation and subsequent use in analyzing grid stability. The collected measurements can be broadly classified into two kinds: meter readings and breaker statuses. The breaker statuses on transmission lines help create the current operational topology of the grid. The meter readings, comprising of line flow and bus power injection measurements, are then used to estimate the state variables over the estimated topology. In a practical setting, the collected measurements suffer from noise, that get added at source or during communication to the control center. The affect of such noise is minimized through placement of redundant/additional meters and use of suitable bad-data detection and correction techniques at the estimator \cite{estimation}.

Cyber-attacks on the power grid refer to corruption of measurements (meter readings and breaker statuses) by an adversary, aimed at changing the state estimation output, without getting detected by the estimator's checks. The viability of such attacks has in fact been demonstrated through controlled experiments like the Aurora attack in Department of Energy's Idaho Laboratory and GPS spoofing attack on phasor measurement units (PMUs) \cite{gps}. Past literature on cyber-attacks have generally looked at adversaries that change meter data (and not breaker statuses) to affect state estimation. Such data attacks involving injection of malicious data into meters were first analyzed in \cite{hidden}. Using a DC power flow model for state estimation, the authors of \cite{hidden} provide an attack design using projection matrices. Following this, several approaches have been discussed to study hidden attacks under different operating conditions. These include mixed integer programming \cite{sou}, heuristic based detection \cite{thomas}, sparse recovery using $l_0-l_1$ relaxation \cite{poor}, graph-cut based construction for systems with phasor measurement units (PMUs) \cite{deka2} among others. The possible economic ill-affects of such hidden data attacks on power markets are presented in \cite{price}. In a recent paper \cite{topology1}, the authors investigates hidden attacks under the more general and potent regime of topology data (breaker statuses) and meter data corruption. All of these cited work on 'data alone' or 'topology and data' attacks, however, require changing floating point meter measurements in real time. The practicality of this is questionable as significant resources are required to synchronize the changes at multiple meters.

In this paper, we focus on hidden attacks that primarily operate through changes in breaker statuses. Here the adversary changes the statuses of a few operational breakers from $1$ (closed) to $0$ (open), as well as jams (blocks the communication) of flow measurements on a subset of transmission lines in the grid. However, the adversary does not modify any meter reading to an arbitrary value. We term these attacks as 'breaker-jammer' attacks. Note that breaker statuses, unlike meter readings,  are binary in nature and fluctuate with lower frequency. They are thus easier to change, even by adversaries with limited resources. Jamming measurements, through jammers or by destruction of communication apparatus, is technologically less intensive than corrupting meter measurements. In fact, jamming does not raise a major alarm as measurement loss due random communication drops occurs under normal circumstances. The 'breaker-jammer' attack model was introduced by the authors for grids with a specific meter configuration requiring sufficient line flow measurements in \cite{infocom}. This work generalizes the framework to any grid with line flow and injection meters and uses a novel graph-coloring analysis to determine the optimal hidden attack. Our graph coloring based analysis is in principal similar to \cite{poona} which studies standard data attacks as a graph partitioning problem. However, the similarly ends there as our attack model does not use corruption of meter readings. Instead breaker status changes and line flow jams provide a different set of necessary and sufficient conditions for feasible attacks. The surprising revelation of our analysis is that under normal operating conditions, a single breaker status change (with the necessary flow measurement jamming) is sufficient to create an undetectable attack. In fact, we show that if a hidden attack can be constructed by changing the status of a set of breakers, then a hidden attack using only one break status change exists as well. This is significant as the adversary can focus on jamming the necessary flow measurements, after selecting a breaker to attack. Further, our attack design does not depend on the current system state or transmission line parameter values, and has low information requirements.

The rest of this paper is organized as follows. We present the system model used in generalized state estimation and describe the attack model in the next section. The graph coloring approach to determine the necessary and sufficient conditions for a hidden attack and elucidating examples are discussed in Section \ref{sec:coloring}. The design of the optimal hidden attack is discussed in Section \ref{sec:design} along with simulations on IEEE test cases. Finally, concluding remarks and future directions of work are presented in Section \ref{sec:conclusion}.

\section{Generalized State Estimation in the Power Grid and Topology-based Attacks}
\label{sec:attack}
First, we provide a brief description of the notation used. We represent the current operational structure of the grid by graph $\cal G = (\cal V, \cal E)$ where $\cal V$ denotes the set of buses/nodes of size $n_{\cal V}$ and $\cal E$  denotes the set of operational edges of size $n_{\cal E}$. The set of binary breakers statuses for the edges is denoted by the diagonal matrix $D$ of size $n_{\cal E} \times n_{\cal E}$. We assume that all lines to be initially operational ($D$ is identity matrix) and ignore any non-operation line for ease of notation. The edge to node incidence matrix is denoted by $M$ of dimension $n_{\cal E} \times n_{\cal V}$. Each operational edge $(ab)$ between nodes $a$ and $b$ has a corresponding row $M_{ab}$ in $M$, where $M_{ab} = e_a'-e_b'$. $e_a$ denotes the standard basis vector in $\mathbb{R}^{n_{\cal V}}$ with one at the $a^{th}$ location. The direction of flow on edge $(ab)$ is taken to be from $a$ to $b$, without any loss of generality. We consider the DC power flow model for state estimation in this paper \cite{bookdc}. The state variables in this model are the bus phase angles, denoted by the $n_{\cal V}\times 1$ vector $x$. The set of measurements is denoted by the vector $z = \setlength{\arraycolsep}{2pt} \renewcommand{\arraystretch}{0.8}\begin{bmatrix} z_f \\z_{inj} \end{bmatrix}$. Here line flow measurements are included in $z_f$ and bus injection measurements are included in $z_{inj}$.

State estimation in the power grid relies on the breaker statuses in $D$ for topology estimation and then uses the meter measurements $z$ for estimating the state vector $x$. The relation between $x$ and $z$ in the DC model is given by $z =  Hx + e$ where $e$ is the zero mean Gaussian noise vector with covariance matrix $\Sigma_e$. $H$ is the measurement matrix and depends on the grid structure and susceptance of transmission lines. Let the $k_1^{th}$ entry in $z$ corresponds to the flow measurement on line $(ab)$. Then $H(k_1,:)$ (the $k_1^{th}$ row in $H$) is given by
\begin{align}
H(k_1,:) = [0...B_{ab}..0..-B{ab}..0] = B_{ab}M_{ab} \label{flow}
\end{align}
with the non-zero values at the $a^{th}$ and $b^{th}$ locations respectively. $B_{ab}$ is the susceptance of the line $(ab)$. On the other hand, if the $k_2^{th}$ entry corresponds to an injection measurement at node $a$, we have $z(k_2) = \sum_{r: (ar) \in \cal E}B_{ar}(x_a-x_r)$. In matrix form, ignoring measurement noise, we can write equations for received measurements as
\begin{align}
z_f = TBMx \text{~~for flow measurements} \label{flowmat}\\
z_{inj} = M_{inj}'BMx \text{~~for injection measurements} \label{injmat}
\end{align}
$B$ is the diagonal matrix of susceptances of lines in $\cal E$. We arrange the rows in $M$ such that the top $|z_f|$ rows represent the lines with flow measurements. Matrix $T$, comprising of the top $|z_f|$ rows of a $n_{\cal E} \times n_{\cal E}$ identity matrix, selects these measured flows. For ease of notation and analysis in later sections, we pad trailing zeros to vector $z_f$ and make it of length $n_{\cal E}$. Similarly, we pad trailing all-zero rows to $T$ to make it a diagonal square matrix of dimension $n_{\cal E}$. $M_{inj}$ on the other hand consists of the columns of $M$ that correspond to the nodes with injection measurements. The optimal state vector estimate $\hat{x}$ is given by minimizing the residual $\Sigma_e^{-.5}\|z-H\hat{x}\|_2$. If the minimum residual does not satisfy a tolerance threshold, bad-data detection flags turn on and data correction is done by the estimator. The overall scheme of topology and state estimation processes followed by bad-data detection and correction is called generalized state estimation (GSE) \cite{bookdc} as illustrated in Figure \ref{estimator}.
\begin{figure}
\centering
\includegraphics[width=0.38\textwidth, height=0.22\textwidth]{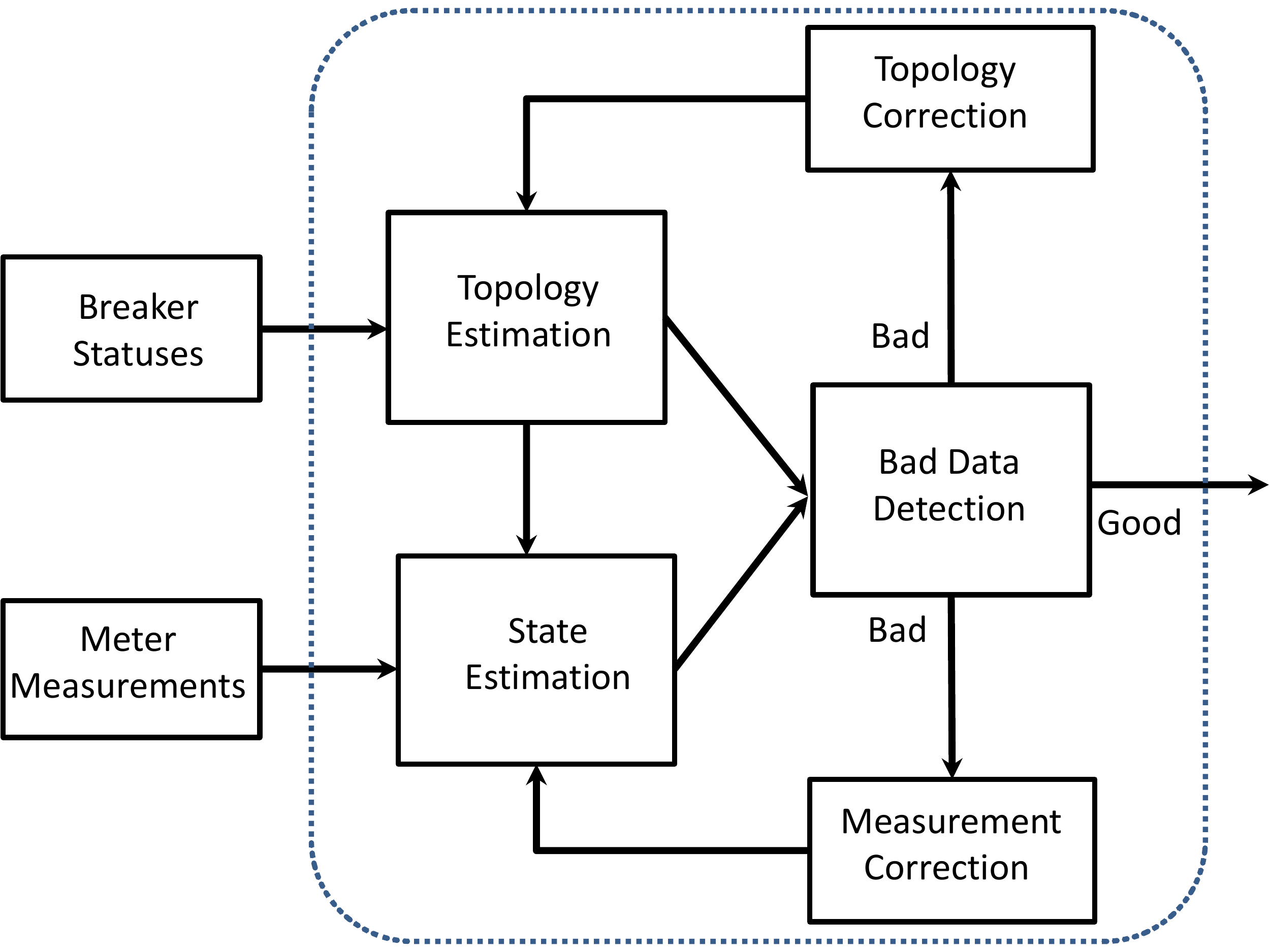}
\squeezeup
\caption{Generalized State Estimator for a power system}
\squeezeup
\label{estimator}
\end{figure}

\textbf{Attack Model:} We assume that the adversary is agnostic and has no information on the current system state $x$ or line susceptance matrix $B$. For attack, the adversary changes the breaker statuses on some lines. The new breaker status matrix, after attack, is denoted by $D-D_a$ where diagonal matrix $D_a$ has a value of $1$ for attacked breakers. Similarly, the available flow measurements after jamming are represented by $T-T_a$, with diagonal matrix $T_a$ having a value of $1$ corresponding to jammed flows. Let the new state vector estimated after the breaker-jammer attack be denoted by $x+c$, where $c \neq 0 $ denotes the change. Note that if the flow measurement on a line is not jammed, its value remains the same following the attack. Using (\ref{flowmat}), we have
\begin{align}
(T-T_a)BMx = (T-T_a)BM(x+c)\nonumber\\
\Rightarrow (T-T_a)BMc = 0 \label{flowcond}
\end{align}
It follows immediately that if the breaker status on the $r^{th}$ line with flow measurement is changed ($D_a(r,r) = 1$), to avoid detection, its flow measurement needs to be jammed as well ($T_a(r,r) = T (r,r) =1$). Thus,
\begin{align}
D_a(T-T_a) = 0 \label{breakjam}
\end{align}
Consider the injection measurements ($z_{inj}$) now, which are not changed during the attack. The breaker attack leads to removal of lines marked as open from Equation (\ref{injmat}), resulting in the following modification.
\begin{align}
z_{inj} = M_{inj}'BMx  &= M_{inj}'(D-D_a)BM(x+c)\nonumber\\
\Rightarrow~M_{inj}'D_aBMx &= M_{inj}'(D-D_a)BMc\nonumber\\
(\text{~Using~} (\ref{flowcond})) &= M_{inj}'(I-D_a)(I- T+T_a)BMc\nonumber\\
(\text{~Using~} (\ref{breakjam})) &= M_{inj}'(I-(D_a+T-T_a))BMc\label{injcond}
\end{align}
Equation (\ref{injcond}) thus states that after the 'breaker-jammer' attack, for each injection measurement, the sum of original flows contributed by lines with attacked breakers (left side) needs to be accommodated by changes in estimated flows on lines (connected to the same bus) whose breakers are intact but actual flow measurements are not received (right side). Finally, for unique state estimation following the adversarial attack (with one bus considered reference bus with phase angle $0$) we need
\begin{align}
\squeezeup
rank(\setlength{\arraycolsep}{2pt} \renewcommand{\arraystretch}{0.8}\begin{bmatrix} (T-T_a)\\ M_{inj}'(D-D_a)\end{bmatrix}BM) = n_{\cal V}-1\label{rank}
\squeezeup
\end{align}

The necessary conditions for a successful 'breaker-jammer' attack that results in a change in estimated state vector consists of equations (\ref{flowcond}), (\ref{breakjam}), (\ref{injcond}), and (\ref{rank}). In the next section, we describe a graph coloring based  analysis of the necessary and sufficient conditions and use it to discuss design of optimal attacks of our regime.


\section{Attack Analysis based on Graph Coloring}
\label{sec:coloring}
For our graph coloring based analysis, we use the following coloring scheme: \textit{for any change $c$ in the estimated state vector, neighboring buses with same value in $c$ are given same color.} Using this, we now discuss a permissible graph coloring corresponding to the requirements of a feasible attack discussed in the previous section. Equation (\ref{flowcond}) states that if the flow on line $(ab)$ between buses $a$ and $b$ is not jammed, $c(a) = c(b)$ (same color in our scheme). Thus, \textit{\textbf{a set of buses connected through lines with available flow measurements (not jammed) has the same color.}}

This implies that the grid buses, following a feasible attack, can be divided into groups, each group having a distinct color. The lines between buses of different groups do not carry any flow measurement or are jammed by the adversary. A test example is illustrated in Figure \ref{fig:graphcoloring}. Observe the buses with injection measurements, that are not corrupted by the adversary. For an interior bus $d$, (all neighboring nodes have the same color as itself), the right side of Equation (\ref{injcond}) equates to zero. The left side becomes equal to zero, under normal operating conditions, if breakers on lines connected to bus $d$ are not attacked. Thus, we have \textit{\textbf{a feasible graph coloring has lines with attacked breakers connected to boundary buses.}} A boundary bus is one that has neighboring buses of colors distinct from itself.

\begin{figure}
\centering
\includegraphics[width=0.42\textwidth, height=0.30\textwidth]{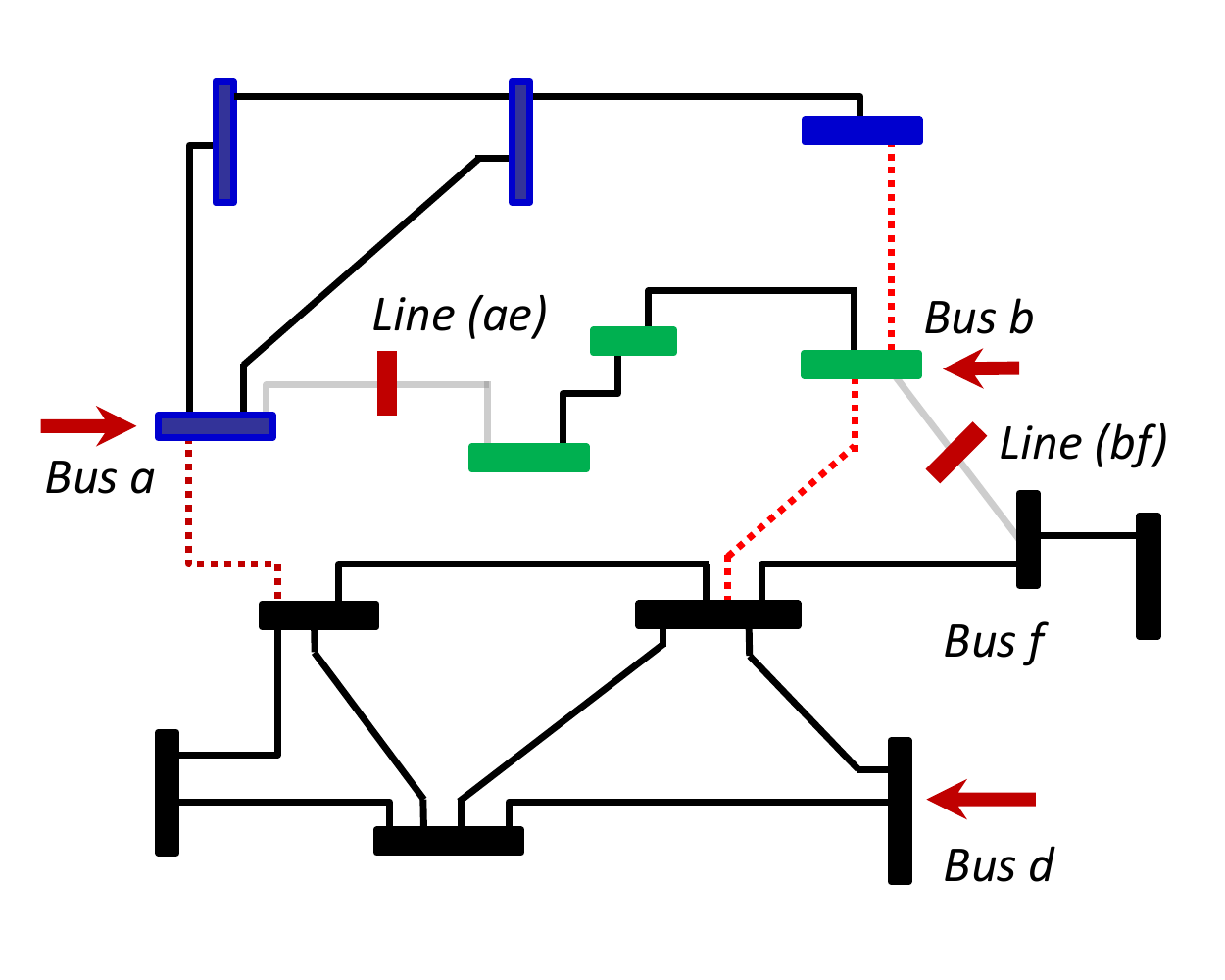}
\squeezeup
\caption{Feasible graph coloring scheme on IEEE $14$ bus system \cite{testsystem} with flow measurements on all lines and injection measurements at buses $a$, $b$ and $d$. The blue, green and black buses are divided into groups and have same value of change $c$ in estimated state vector. The dotted red lines represent jammed lines, solid black lines represent operational lines. The grey lines with red bars represent the lines $(ae)$ and $(bf)$ with attacked breakers.}
\squeezeup
\label{fig:graphcoloring}
\end{figure}

Now, consider the injection meter installed on any boundary bus. Such buses can exist in two configurations: a) connected to lines with attacked breaker (see bus $a$ in Figure \ref{fig:graphcoloring}) or b) connected to only lines with correct breaker statuses (node $b$ if line $(bf)$ did not have a breaker attack).
In either case, using (\ref{injcond}), we have:
\textit{\textbf{each injection measurement placed at a boundary bus provides one constraint relating the values of $c$ for neighboring differently colored buses.}}

For further analysis, we now use the coloring constraints highlighted in bold above to construct a reduced grid graph $\hat{\cal G}$ from $\cal G$ as follows:

\noindent \textbf{1}. In each colored group, club boundary buses without injection measurements with all interior buses into one 'supernode' of that color. Make boundary buses with injection measurements into supernodes with the same color. Connect supernodes of same color with artificial lines of zero susceptance.

\noindent \textbf{2}. For each line with intact breaker between two buses of different colors, create a line of same impedance between their corresponding 'supernodes'. Remove supernodes connected only to other supernodes of same color.

\noindent \textbf{3}. Make injection measurements on supernodes equal to the sum of original flows on lines with attacked breakers connected to them (positive for inflow, negative for outflow). If no incident line has attacked breaker, make the injection equal to $0$.

\begin{figure}
\centering
\includegraphics[width=0.33\textwidth, height=0.22\textwidth]{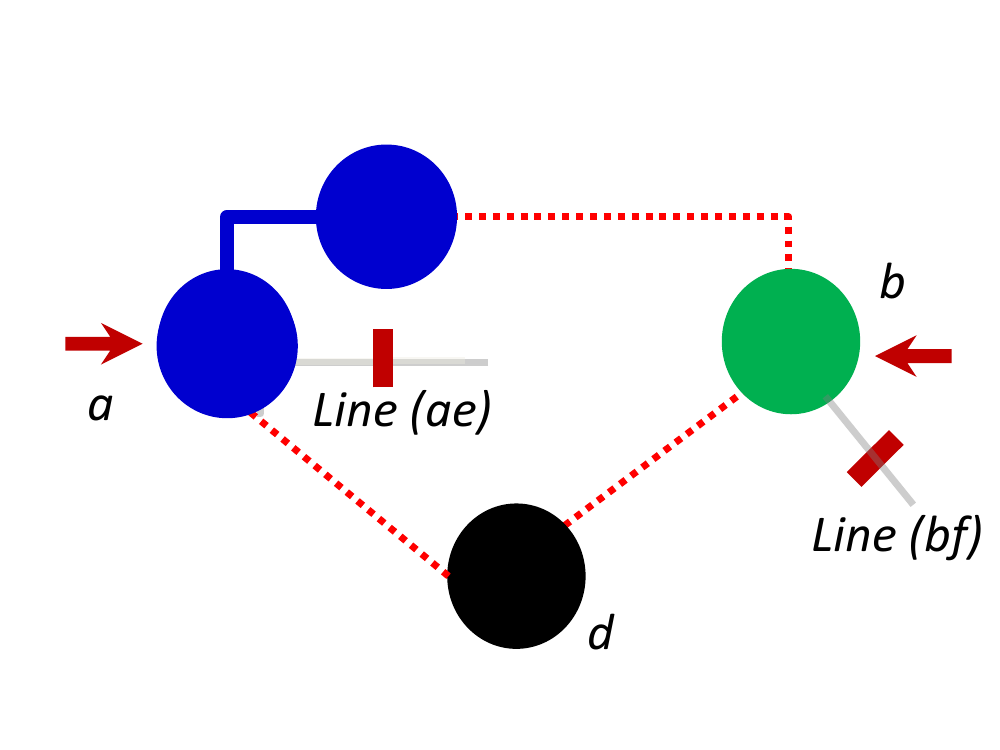}
\squeezeup
\caption{Reduced graph construction for the test case given in Figure \ref{fig:graphcoloring}. The blue, green and black solid circles represent super nodes for buses $a$, $b$ and $d$ respectively. The flow on the dotted red lines are not measured after attack. The grey lines with red bars represent lines with attacked breakers, that influence the injections at supernodes $a$ and $b$.}
\squeezeup
\label{fig:supernode}
\end{figure}

Figure \ref{fig:supernode} illustrates the reduced graph construction for the example in Figure \ref{fig:graphcoloring}. Note that in the reduced graph $\hat{\cal G}$, original lines between buses of same color are removed. The included lines exist between buses of different colors and have jammed or unavailable flow measurements. Similarly, injection measurement relation (\ref{injcond}) at interior nodes are trivially satisfied by $c$ and are ignored. The reduced system, thus, only includes constraints from boundary injection measurements that are similar in form to equation (\ref{injcond}) as shown below:
\begin{align}
\sum_{b: (ab) \in \hat{\cal E}}\hat{B}_{ab}(\hat{c}_a - \hat{c}_b) &= \hat{z}_{inj}(a) \label{reducedinjcond}
\end{align}
Here, $a$ and $b$ are supernodes of different colors. The numeric value for the color of supernode $a$ is given by $\hat{c}_a$ (not the $a^{th}$ entry in  $\hat{c}$). $\hat{B}$ and $\hat{\cal E}$ are the susceptance matrix and edge set corresponding to the reduced graph $\hat{\cal G}$. $\hat{z}_{inj}(a)$ denotes the injection measurement on supernode $a$ with value given by Step $3$ in the reduced graph construction. Note that equation (\ref{reducedinjcond}) for the injection measurements involves rows of the susceptance weighted Laplacian matrix for $\hat{\cal G}$. A unique solution of $\hat{c}$ for $\hat{\cal G}$ in turn provides a uniquely estimated $c$ in $G$ after the adversarial attack. We now look at condition (\ref{rank}), necessary for unique state estimation after a feasible adversarial attack in terms of graph coloring. The reduced graph $\hat{\cal G}$ greatly simplifies our analysis here.

First, it is clear that each color must have at least one supernode or a neighboring supernode (of different color) with injection measurement. Otherwise the value of $\hat{c}$ for that color will not be in any injection constraint. This goes against uniqueness of state estimation. Note that the number of degrees of freedom in $\hat{c}$ (representing distinct values in $c$) is one less than the number of colors as one color denotes the reference phase change of $0$.  Using $\hat{\cal G}$, we prove the following result regarding permissible graph coloring for unique estimation.

\begin{theorem} \label{oneless}
Following a 'breaker-jammer' attack, the number of injection measurements at the boundary buses should be one less than the number of distinct colors in the grid buses.
\end{theorem}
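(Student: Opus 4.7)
The plan is to recast the boundary injection constraints (\ref{reducedinjcond}) as a linear system $A\hat{c}=\hat{z}_{inj}$, where $A$ is an $m\times k$ coefficient matrix with $m$ the number of boundary injection measurements and $k$ the number of distinct colors, and then bound $\mathrm{rank}(A)$ from both sides. Indexing rows by boundary injection supernodes and columns by colors, the $(a,i)$ entry of $A$ equals $\sum_{(ab)\in\hat{\cal E}}\hat{B}_{ab}$ when $i$ is the color of $a$, and $-\sum_{(ab)\in\hat{\cal E},\,\mathrm{color}(b)=i}\hat{B}_{ab}$ otherwise.

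For the lower bound $m\ge k-1$, I would invoke the uniqueness requirement from condition (\ref{rank}). Fixing one color as the reference phase leaves $\hat{c}$ with $k-1$ free components, and unique post-attack state estimation forces $\mathrm{rank}(A)=k-1$, which in turn demands $m\ge k-1$. For the upper bound $m\le k-1$, I would first observe that each row of $A$ sums to zero: the diagonal contribution $\sum_{(ab)\in\hat{\cal E}}\hat{B}_{ab}$ is exactly cancelled by the combined off-diagonal terms $-\sum_{i\ne\mathrm{color}(a)}\sum_{(ab)\in\hat{\cal E},\,\mathrm{color}(b)=i}\hat{B}_{ab}$, since every edge incident to $a$ in $\hat{\cal E}$ leaves $a$'s color and terminates at some other color. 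Hence $A\mathbf{1}=0$ and $\mathrm{rank}(A)\le k-1$, so whenever $m>k-1$ at least $m-(k-1)$ rows of $A$ must be linearly dependent on the others, and consistency of $A\hat{c}=\hat{z}_{inj}$ forces matching dependencies among the entries of $\hat{z}_{inj}$.

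Those forced identities are non-trivial linear combinations of pre-attack flows $B_{ar}(x_a-x_r)$ on the attacked-breaker lines, quantities the agnostic adversary cannot tune because they depend on the unknown state $x$ and susceptance matrix $B$. For generic $(x,B)$ the consistency conditions therefore fail, so a feasible attack forces $m\le k-1$ and hence the desired equality $m=k-1$. The main obstacle I anticipate is this last genericity step: one must argue that any non-trivial left-null vector of $A$ induces a genuine restriction on the attacked-line flows rather than an identity already implied by flow conservation in the original grid. I would resolve it either via a mild genericity assumption on $(x,B)$, or by a combinatorial argument that tracks how each attacked edge contributes exactly once to a boundary-injection equation and exploits the connectivity of the reduced graph $\hat{\cal G}$.
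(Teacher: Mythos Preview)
Your argument is correct and shares the paper's overall strategy: both obtain $m\ge k-1$ from the uniqueness requirement and then argue that any surplus injection constraint would be generically inconsistent because the entries of $\hat z_{inj}$ are fixed by the true flows on attacked lines, which the agnostic adversary cannot tune. The difference lies in how the rank bound is obtained. You derive $\mathrm{rank}(A)\le k-1$ uniformly from the Laplacian-type identity $A\mathbf{1}=0$, which immediately makes any system with $m>k-1$ overdetermined. The paper instead proceeds by a short case analysis on $\hat{\cal G}$: when the injection-carrying supernodes do \emph{not} form a closed ring, the $k-1$ rows are asserted to have full rank; when they \emph{do} form a closed ring, the rows constitute the full susceptance-weighted Laplacian of that ring and are therefore rank-deficient, and the paper then appeals to the same genericity of $\hat z_{inj}$ (``will not cancel out under normal operating conditions'') to exclude that configuration. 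Your route is cleaner and sidesteps the case split, while the paper's ring discussion is really the same row-sum observation specialized to the only configuration where the deficiency is visible with exactly $k-1$ rows. Your stated caveat about the genericity step is appropriate and mirrors the paper's own hedge; neither proof fully formalizes it.
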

\begin{proof}
Let the number of colored groups be $k$. Then the number of independent entries in $\hat{c}$ is $k-1$ (one entry being $0$). The total number of linear constraints involving the numeric values in $\hat{c}$ is equal to the number of injection measurements at the supernodes in $\hat{\cal G}$. For unique state estimation, the number of injection measurements should thus be greater than or equal to $k-1$. We now show that exactly $k-1$ injection measurements are needed to get a solution to state estimation. Consider the reduced graph $\hat{\cal G}$. For real valued line susceptances and for cases where the supernodes having injection measurements do not form a closed ring with no additional branches (see Figure \ref{fig:supernode}), the rank of $k-1$ rows is $k-1$ and we have unique state estimation. If the reduced graph $\hat{\cal G}$ contains a closed ring of supernodes with injection measurements, then the measurements will represent the entire susceptance weighted graph Laplacian of the ring, that is rank deficient. However, the real valued entries in $\hat(z)_{inj}$ that exist on the right side of (\ref{reducedinjcond}) and are derived from flows on lines with attacked breakers, will not cancel out under normal operating conditions. Further, the adversary designing the attack is unaware of the current system state and will be able to determine if they do. Hence the $k-1$ injections measurements constraints will be linearly independent (the adversary will expect this under normal operations). This gives an unique $\hat{c}$ and $c$ for a $k$ distinct colored grid graph.
\end{proof}

To summarize, the highlighted statements and Theorem \ref{oneless} provide the necessary and sufficient conditions for a feasible 'breaker-jammer attack' under our graph-coloring scheme. In the next section, we show that the graph coloring approach proves a surprising result that simplifies the design of an optimal attack.

\section{Design of Optimal Attack}
\label{sec:design}
We call an 'breaker-jammer' feasible attack optimal if it requires minimum number of breaker status changes (considering the fact that doing so is significantly more resource draining than measurement jamming). If multiple attacks are possible using the minimum number of breaker changes, we select as optimal the attack that requires the least number of flow measurement jams. Using the reduced graph $\hat{\cal G}$, we present the following result for the minimum number of breaker changes needed for a feasible attack under normal operating conditions (non-zero real-valued bus susceptances and line flows that are distinct for different grid elements).

\begin{theorem} \label{oneenough}
If a feasible attack can be designed with $k$ breaker status changes, then a feasible attack exists such that all but one breaker statuses are changed back to their original operational state ($1$), while keeping their line flow measurements jammed.
\end{theorem}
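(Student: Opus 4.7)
The plan is to show that retaining exactly one of the $k$ originally attacked breakers, while restoring the other $k-1$ to their closed state and keeping every flow measurement jammed (i.e., the same $T_a$), already gives a feasible attack. The key observation is that the graph-coloring partition invoked in Theorem~\ref{oneless} is determined entirely by which flows are jammed, and we are not touching $T_a$. Consequently, the color partition, the identification of boundary versus interior buses, and the $m-1$ boundary injection measurements all transfer unchanged from the original attack.

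First I would fix the $k$-breaker attack together with its reduced graph $\hat{\cal G}$. By Theorem~\ref{oneless}, the reduced injection system $L\hat{c} = \hat{z}_{inj}$ has an invertible coefficient matrix and a non-trivial solution $\hat{c} \neq 0$, hence $\hat{z}_{inj} \neq 0$. Because $\hat{z}_{inj}$ is assembled additively, breaker by breaker, from original flows on lines with attacked breakers incident to injection supernodes, at least one breaker $\ell$ must contribute a non-zero vector of its own to $\hat{z}_{inj}$ under the normal-operating-condition assumption that real-valued line flows do not conspire to cancel exactly. I would choose this $\ell$ as the single surviving attacked breaker; concretely, $\ell$ is any breaker among the original $k$ whose endpoint is a boundary bus carrying an injection measurement.

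Next I would verify feasibility of the new attack $(D'_a, T_a)$ in which $D'_a$ has a single $1$ at the position of $\ell$. Conditions (\ref{flowcond}) and (\ref{breakjam}) are immediate: $T_a$ is unchanged, so the color constraints on $c'$ are the same, and the breaker on $\ell$ is still jammed. For (\ref{injcond}) I would rewrite it in the reduced form (\ref{reducedinjcond}) on a new reduced graph $\hat{\cal G}'$ that inherits the supernodes of $\hat{\cal G}$ but has an enlarged edge set $\hat{\cal E}' \supseteq \hat{\cal E}$: every line whose breaker has been un-attacked (but whose flow is still jammed) re-enters the reduced graph as an ordinary edge. The new injection right-hand side $\hat{z}'_{inj}$ keeps only the contribution of $\ell$, which is non-zero by construction. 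Solving $L'\hat{c}' = \hat{z}'_{inj}$ then yields $\hat{c}' \neq 0$, lifting to a state change $c' \neq 0$ in the original grid; the rank condition (\ref{rank}) follows from Theorem~\ref{oneless} applied to the same coloring on $\hat{\cal G}'$.

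The main obstacle I anticipate is justifying invertibility of $L'$ rather than of the original $L$, because $\hat{\cal G}'$ has strictly more edges than $\hat{\cal G}$. Adding edges between supernodes can only increase the connectivity, and therefore the rank, of the susceptance-weighted Laplacian, so the same argument used in Theorem~\ref{oneless} --- including the closed-ring caveat, which is resolved by the genericity of real-valued line flows --- carries over without modification. A secondary and easier subtlety is guaranteeing a usable $\ell$ exists, but this is forced by $\hat{z}_{inj} \neq 0$ in the original attack. Note that the adversary need not know numerical values of $x$ or $B$: under the normal-operating assumption, any breaker from the original scheme whose endpoint is a boundary injection supernode works as $\ell$, so the reduction to a single breaker is purely topological.
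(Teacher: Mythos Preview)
Your approach is essentially the same as the paper's: keep the jamming set $T_a$ (hence the coloring and the boundary injection constraints) intact, revert all but one breaker, and argue that the reduced system still yields a non-zero $\hat{c}$. Where you differ is in rigor. The paper asserts that reverting a breaker ``only'' changes the right-hand side $\hat{z}_{inj}$ of (\ref{reducedinjcond}); you correctly observe that the un-attacked line re-enters $\hat{\cal G}$ as an ordinary edge, so the coefficient matrix changes from $L$ to $L'$ as well, and you handle this by noting that additional edges can only help the Laplacian rank argument of Theorem~\ref{oneless}. Your explicit selection of the surviving breaker $\ell$ as one contributing a non-zero term to $\hat{z}_{inj}$ is also cleaner than the paper's implicit ``at least one injection remains non-zero.'' In short, your plan matches the paper's strategy but patches a gap the paper's text leaves open, at the cost of being slightly longer; nothing substantive is missing.
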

\begin{proof}
Construct the reduced graph $\hat{\cal G}$ with its colored supernodes for the feasible attack with $k$ breakers and necessary flow measurement jams. Let the number of colors in state estimation change $c$ be $r+1$. The length of $\hat{c}$ is then $r+1$. By Theorem \ref{oneless}, there are $r$ injection measurements at the supernodes that provide constraint equations listed in (\ref{reducedinjcond}). If we revert the breaker status of an attacked line back to $1$ while keeping its flow measurement jammed, the only change in any constraint equation (\ref{reducedinjcond}) involving that line will be that the injection measurement on the incident node (entry in $\hat{z}$) will become $0$. Since all but one breakers are brought back to the operational state, at least one injection measurement in $\hat{z}$ will still remain non-zero and the $r$ constraint equations will still have linear inndependence. Thus, state estimation will result in a different but non-zero $\hat{c}$, leading to a feasible attack.
\end{proof}
For example, consider the case in Figure \ref{fig:graphcoloring} where two breaker statuses are attacked. If the breaker status on line $(bf)$ is changed back to $1$ while keeping the flow measurement jammed, the new reduced graph that will be derived is given in Figure \ref{fig:supernode1}. As mentioned in Theorem \ref{oneenough}, the coloring scheme is still feasible and a non-zero change in state estimation results.
\begin{figure}
\centering
\includegraphics[width=0.33\textwidth, height=0.22\textwidth]{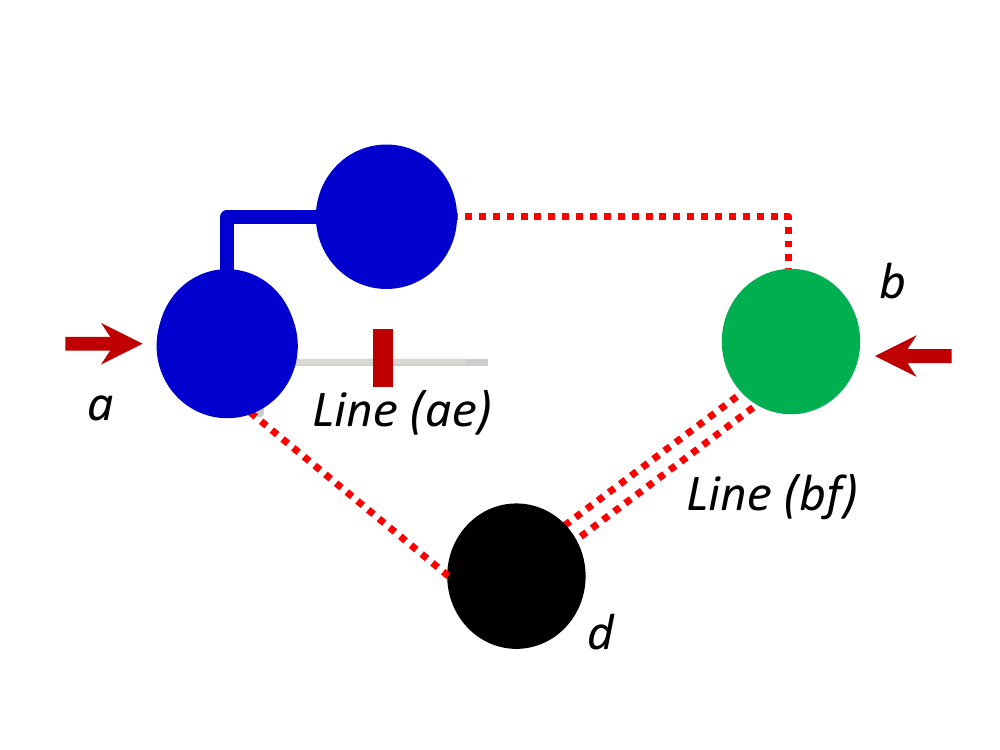}
\squeezeup
\caption{Reduced graph construction for IEEE $14$ bus case given in Figure \ref{fig:graphcoloring}, but with line $(bf)$ being changed to a dotted red line. The blue, green and black solid circles represent super nodes for buses $a$, $b$ and $d$ respectively. The flow on the dotted red lines are not received. The only grey line with red bar represents the line $(ae)$ with an attacked breaker.}
\squeezeup
\squeezeup
\label{fig:supernode1}
\end{figure}

This is a very significant result and simplifies the search for an optimal attack greatly. Since one breaker change is sufficient, the adversary can select each line in turn ($n_{\cal E}$ iterations), attack its breaker (change the corresponding entry in diagonal $D_a$ to $1$) and determine the flow measurements that need to be jammed (given by diagonal $T_a$) to conduct a feasible attack. The breaker change that requires the minimum number of measurement jams (or maximally sparse $T_a$) will then give the optimal attack. The selection of jammed measurements, after fixing $D_a$, is formulated as (\ref{opt_attack}).
\begin{align}
\squeezeup
\underset{c \neq \textbf{0},T_a}{\text{minimize}}~~~& \|T_a\|_0 \label{opt_attack} \\
\text{subject to}
~~~&T_a\text{~is diagonal,}~T_a\text{~satisfies (\ref{flowcond}), (\ref{breakjam}), (\ref{injcond})}\nonumber
\squeezeup
\end{align}
This is simplified in formulation (\ref{opt_attack1}) where the jammed measurements (with $1$ on diagonal of $T_a$) are given by the non-zero entries in $TMc$. $l_0-l_1$ relaxation can be used to approximately solve (\ref{opt_attack1}). Since the adversary has no access to the actual state vector $x$, a random non-zero $x$ is used. Similarly, unavailable line susceptance $B$ are replaced with distinct real values. These replacements, under normal conditions, do not affect the optimal solution as they preserve the linear independence of injection constraints given in (\ref{reducedinjcond}).
\begin{align}
\squeezeup
\underset{c \neq \textbf{0}}{\text{minimize}}~~~&\|TMc\|_0 \label{opt_attack1}\\
\text{subject to}
~~~&M_{inj}'D_aBMx = M_{inj}'BMc\nonumber
\squeezeup
\end{align}
The rank constraint (\ref{rank}) is not included in the optimization framework and can be checked manually after determining $T_a$, for consistency.\\
\textbf{Experiments:} We simulate our attack model on IEEE $14, 30$ and $57$ bus test systems \cite{testsystem} and present averaged findings in Figure \ref{fig:topologyplot}. For each test system considered, we place flow measurements on all lines and injection measurements on a fraction of buses, selected randomly. To design a feasible attack involving a line, we change its breaker status and solve Problem (\ref{opt_attack1}) to jam flows measurements to prevent detection. This is repeated for each line to determine the optimal attack. In Figure \ref{fig:topologyplot}, note that the average number of flow measurements jammed increases with the number of injection measurements. This happens due to an increase in the number of injection constraints that require more measurement jams.
\begin{figure}
\centering
\includegraphics[width=0.44\textwidth, height=0.32\textwidth]{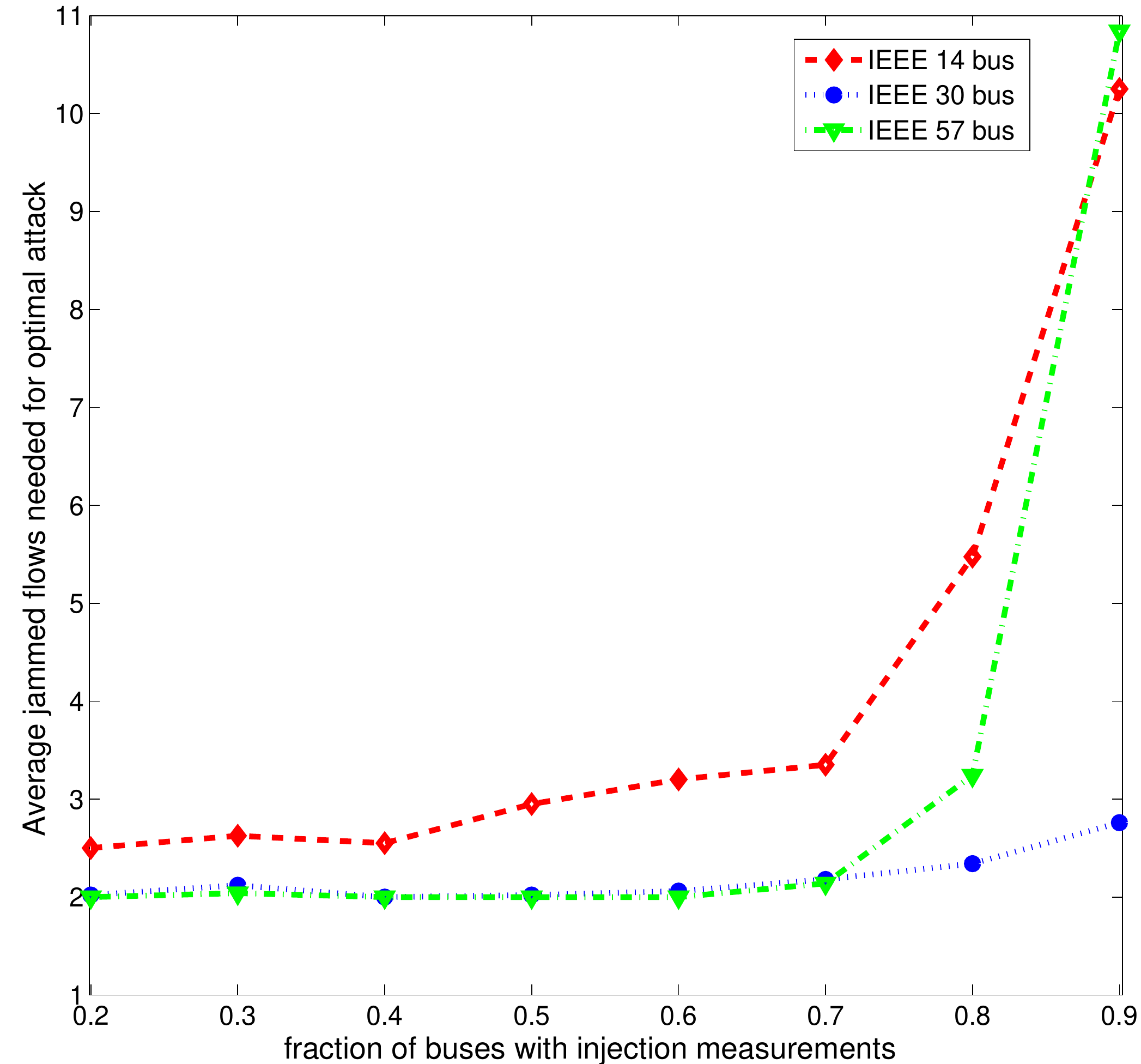}
\squeezeup
\caption{Average number of flow measurements jammed for optimal 'breaker-jammer' attacks on IEEE test systems. Injection measurements are placed on a fraction of buses (selected randomly) and flow measurements are placed on all lines.}
\squeezeup
\label{fig:topologyplot}
\end{figure}

\section{Conclusion}
\label{sec:conclusion}
In this paper, we study topology based cyber-attacks on power grids where an adversary changes the breaker statuses of operational lines and marks them as open. The adversary also jams flow measurements on certain lines to prevent detection at the state estimator. The attack framework is novel as it does not involve any injection of corrupted data into meters or knowledge of system parameters and current system state. Using lesser information and resource overhead than traditional data attacks, our attack regime explores attacks on systems where all meter data are protected from external manipulation. We discuss necessary and sufficient conditions for the existence of feasible attacks through a new graph-coloring approach. The most important result arising from our analysis is that optimal topology based attacks exist that require a single breaker status change. Finally, we discuss an optimization framework to select flow measurements that are jammed to prevent detection of the optimal attack. Its efficacy is presented through simulations on IEEE test cases. Designing protection schemes for our attack model is the focus of our current work.

\end{document}